\newfont{\nset}{msbm10}
\def\len#1{\left\lVert #1 \right\rVert}
\def\kh#1{\left( #1 \right)}
\def\ceil#1{\left\lceil #1 \right\rceil}
\def\defeq{\stackrel{\mathrm{def}}{=}}
\newcommand{\removelatexerror}{\let\@latex@error\@gobble}
\newcommand{\eps}{\epsilon}
\newcommand{\rea}{\mathbb{R}}
\newcommand{\LaplSolver}{\textsc{LaplSolve}}
\newcommand{\ExactSM}{\textsc{ExactSM}}
\newcommand{\ApproxiSM}{\textsc{ApproxiSM}}
\newcommand{\VReffComp}{\textsc{VReffComp}}
\newcommand\yy{\boldsymbol{\mathit{y}}}
\newcommand\zz{\boldsymbol{\mathit{z}}}
\newcommand\xx{\boldsymbol{\mathit{x}}}
\newcommand\bb{\boldsymbol{\mathit{b}}}
\newcommand\ee{\boldsymbol{\mathit{e}}}
\newcommand\LL{\bm{\mathit{L}}}
\renewcommand\AA{\boldsymbol{\mathit{A}}}
\newcommand\BB{\boldsymbol{\mathit{B}}}
\newcommand\JJ{\boldsymbol{\mathit{J}}}
\newcommand\DD{\boldsymbol{\mathit{D}}}
\newcommand\EE{\boldsymbol{\mathit{E}}}
\newcommand\PP{\boldsymbol{\mathit{P}}}
\newcommand\MM{\boldsymbol{\mathit{M}}}
\newcommand\QQ{\boldsymbol{\mathit{Q}}}
\newcommand\NN{\boldsymbol{\mathit{N}}}
\newcommand\Otil{\widetilde{O}}
\def\defeq{\stackrel{\mathrm{def}}{=}}
\def\trace#1{\mathrm{Tr} \left(#1 \right)}
\def\sizeof#1{\left|#1  \right|}
\def\expec#1#2{{\mathbb{E}}_{#1}\left[ #2 \right]}
\newtheorem{theo}{Theorem}[section]
\newtheorem{theorem}[theo]{Theorem}
\newtheorem{lemma}[theo]{Lemma}
\newtheorem{definition}[theo]{Definition}
\title{Improving information centrality of a node in complex networks by adding edges\thanks{This work was supported by NSF.} }
\author{
Liren Shan$^1$,
Yuhao Yi$^1$,
Zhongzhi Zhang$^1$,
\\
$^1$ Shanghai Key Laboratory of Intelligent Information Processing, School of Computer Science, \\Fudan University, Shanghai 200433, China\\
13307130150@fudan.edu.cn,
15110240008@fudan.edu.cn,
zhangzz@fudan.edu.cn
}
\begin{document}

\maketitle

\begin{abstract}
The problem of increasing the centrality of a network node arises in many practical applications. In this paper, we study the optimization problem of maximizing the information centrality $I_v$ of a given node $v$ in a network with $n$ nodes and $m$ edges, by creating $k$ new edges incident to $v$. Since $I_v$ is the reciprocal of the sum of resistance distance $\mathcal{R}_v$ between $v$ and all nodes, we alternatively consider the problem of minimizing $\mathcal{R}_v$ by adding $k$ new edges linked to $v$. We show that the objective function is monotone and supermodular. We provide a simple greedy  algorithm with an approximation factor $\kh{1-\frac{1}{e}}$ and $O(n^3)$ running time. To speed up the computation, we also present an algorithm to compute $\left(1-\frac{1}{e}-\eps\right)$-approximate resistance distance $\mathcal{R}_v$ after iteratively adding $k$ edges, the running time of which is $\Otil (mk\eps^{-2})$ for any $\eps>0$, where the $\Otil (\cdot)$ notation suppresses the ${\rm poly} (\log n)$ factors. We experimentally demonstrate the effectiveness and efficiency of our proposed algorithms.
\end{abstract}

\section{Introduction}
Centrality metrics refer to indicators identifying the varying importance of nodes in complex networks~\cite{LuChReZhZhZh16}, which have become a powerful tool in network analysis and found wide applications in network science~\cite{Ne10}. Over the past years, a great number of centrality indices and corresponding algorithms have been proposed to analyze and understand the roles of nodes in networks~\cite{WhSm03,BoVi14}. Among various centrality indices, betweennees centrality and closeness centrality are probably the two most frequently used ones, especially in social network analysis. However, both indicators only consider the shortest paths, excluding the contributions from other longer paths. In order to overcome the drawback of these two measures, current flow closeness centrality~\cite{BrFl05,Ne05} was introduced and  proved to be exactly the information centrality~\cite{StZe89}, which counts all possible paths between nodes and has a better discriminating power than betweennees centrality~\cite{Ne05} and closeness centrality~\cite{BeWeLuMe16}.

It is recognized that centrality measures have proved of great significance in complex networks. Having high centrality can have positive consequences on the node itself. In this paper, we consider the problem of adding a given number of edges incident to a designated node $v$ so as to maximize the centrality of $v$. Our main motivation or justification for studying this problem is that it has several application scenarios, including airport networks~\cite{IsErTeBe12}, recommendation systems~\cite{PaPiTs16}, among others. For example, in airport networks, a node (airport) has the incentive to improve as much as possible its centrality (transportation capacity) by adding edges (directing flights) connecting itself and other nodes (airports)~\cite{IsErTeBe12}. Another example is the link recommendation problem of recommending to a user $v$ a given number of links from a set of candidate inexistent links incident to  $v$ in order to minimize the shortest distance from $v$ to other nodes~\cite{PaPiTs16}.

The problem of maximizing the centrality of a specific target node through adding edges incident to it has been widely studied. For examples, some authors have studied the problem of creating $k$ edges linked to a node $v$ so that the centrality value for $v$ with respect to concerned centrality measures is maximized, e.g., betweenness centrality~\cite{CrDaSeVe15,DaSeVe16,CrDaSeVe16,HoMoSo18} and closeness centrality~\cite{CrDaSeVe15,HoMoSo18}. Similar optimization problems for a predefined node $v$ were also addressed for other node centrality metrics, including average shortest distance between $v$ and remaining nodes~\cite{MeTa09,PaPiTs16}, largest distance from $v$ to other nodes~\cite{DeZa10}, PageRank~\cite{AvLi06,Os10}, and the number of different paths containing $v$~\cite{IsErTeBe12}. However, previous works do not consider improving information centrality of a node by adding new edges linked to it, despite the fact that it can better distinguish different nodes, compared with betweennees ~\cite{Ne05} and closeness centrality~\cite{BeWeLuMe16}.

In this paper, we study the following problem: Given a graph with $n$ nodes and $m$ edges, how to create $k$ new edges incident to a designated node $v$, so that the information centrality $I_v$ of $v$ is maximized. Since $I_v$ equals the reciprocal of the sum of resistance distance $\mathcal{R}_v$ between $v$ and all nodes, we reduce the problem to minimizing $\mathcal{R}_v$ by introducing $k$ edges connecting  $v$. We demonstrate that the optimization function is monotone and supermodular. To minimize resistance distance $\mathcal{R}_v$, we present two greedy approximation algorithms by iteratively introducing $k$ edges one by one. The former is a $\left(1-\frac{1}{e}\right)$-approximation algorithm with $O(n^3)$ time complexity, while the latter is a $\left(1-\frac{1}{e}-\eps\right)$-approximation algorithm with $\Otil (mk\eps^{-2})$ time complexity, where the $\Otil (\cdot)$ notation hides ${\rm poly} (\log n)$ factors. We test the performance of our algorithms on several model and real networks, which substantially increase information centrality score of a given node and outperform several other adding edge strategies.

\section{Preliminary}
Consider a connected undirected weighted network $G = (V,E,w)$ where $V$ is the set of nodes, $E \subseteq V \times V$ is the set of edges, and $w : E \to \rea_{+}$ is the edge weight function. We use $w_{max}$ to denote the maximum edge weight.
Let $n = |V|$ denote the number of nodes and $m = |E|$ denote the number of edges. For a pair of adjacent nodes $u$ and $v$, we write $u\sim v$ to denote $(u,v) \in E$.
The Laplacian matrix of $G$ is the symmetric matrix $\LL = \DD - \AA$, where $\AA$ is the weighted adjacency matrix of the graph and $\DD$ is the degree diagonal matrix.

Let $\ee_i$ denote the $i$th standard basis vector, and $\bb_{u,v} = \ee_u - \ee_v$.
We fix an arbitrary orientation for all edges in $G$.
For each edge $e\in E$, we define $\bb_e = \bb_{u,v}$, where $u$ and $v$
are head and tail of $e$, respectively. It is easy to verify that $\LL = \sum\nolimits_{e\in E} w(e)\bb_e \bb_e^\top$, where $w(e)\bb_e \bb_e^\top$ is the Laplacian of $e$. $\LL$ is  singular and positive semidefinite. Its pseudoinverse $\LL^\dag$ is $\kh{\LL +\frac{1}{n}\JJ}^{-1} - \frac{1}{n}\JJ$, where $\JJ$ is the matrix with all entries being ones.


For network $G = (V,E,w)$, the resistance distance~\cite{KlRa93} between two nodes $u,v$ is $\mathcal{R}_{uv} = \bb_{u,v}^\top \LL^\dag \bb_{u,v}$. The  resistance distance $\mathcal{R}_v$ of a node $v$ is the sum of resistance distances between  $v$ and all nodes in $V$, that is, $\mathcal{R}_v = \sum_{u \in V} \mathcal{R}_{uv}$, which can be expressed in terms of the entries of  $\LL^\dag$ as~\cite{BoFr13}
\begin{align}\label{eqR1}
\mathcal{R}_v = n \LL^\dag_{vv} + \trace{\LL^\dag}.
\end{align}
Let $\LL_v$ denote the submatrix of Laplacian $\LL$, which is obtained from $\LL$ by deleting the row and column corresponding to node $v$. For a connected graph $G$, $\LL_v$ is invertible for any node $v$,  and  the resistance distance $\mathcal{R}_{uv}$ between $v$ and another node $u$ is equal to $\kh{\LL_v^{-1}}_{uu}$~\cite{INK+13}. Thus, we have
\begin{align}\label{eqR2}
\mathcal{R}_v = \trace{\LL_v^{-1}}.
\end{align}


The  resistance distance $\mathcal{R}_v$ can be used as a  measure of the efficiency for node $v$ in transmitting information to other nodes, and is closely related to information centrality introduced by Stephenson and Zelen to measure the importance of nodes in social networks~\cite{StZe89}. The information $I_{uv}$ transmitted between
$u$ and $v$ is defined as
\begin{align*}
I_{uv} = \frac{1}{\BB^{-1}(u,u) + \BB^{-1}(v,v) - 2\BB^{-1}(u,v)},
\end{align*}
where $\BB = \LL + \JJ$. The information centrality $I_v$  of node $v$ is the harmonic mean of  $I_{uv}$ over all nodes $u$~\cite{StZe89}.
\begin{definition}
For a connected graph $G = (V,E,w)$, the information centrality $I_v$ of a node $v\in V$ is defined as
\begin{align*}
I_v = \frac{n}{ \sum\limits_{u\in V} 1 / I_{uv}}.
\end{align*}
\end{definition}
It was shown~\cite{BrFl05} that
\begin{align}\label{ItoR}
I_v = \frac{n}{\mathcal{R}_v}.
\end{align}

We continue to introduce some useful notations and tools for the convenience of description for our algorithms, including $\eps$-approximation and supermodular function.

Let $a,b\geq 0$ be two nonnegative scalars.
We say $a$ is an $\eps$-approximation~\cite{PS14} of $b$ if
$\exp(-\eps)\, a \leq b \leq \exp(\eps)\, a$. Hereafter, we use $a \approx_{\eps} b$ to represent that $a$ is an $\eps$-approximation of $b$.

Let $X$ be a finite set, and $2^X$ be the set of all subsets of $X$.  
Let $f: 2^X \to \mathbb{R}$ be a set function on $X$. For any subsets $S \subset T \subset X$ and any element $a \in X \setminus T$, we say function $f(\cdot)$ is supermodular if it satisfies $f(S) - f(S \cup \{a\}) \geq f(T) - f(T \cup \{a\})$. A function $f(\cdot)$ is submodular if $-f(\cdot)$ is supermodular.
A set function $f: 2^X \to \mathbb{R}$ is called monotone decreasing if for any subsets $S \subset T \subset X$,
 $f(S) > f(T)$ holds.

\section{Problem Formulation}

For a connected undirected weighted network $G(V,E,w)$, given a set $S$ of weighted edges not in $E$, we use $G(S)$ to denote the network augmented by adding the edges in $S$ to $G$, i.e. $G(S) = (V,E\cup S, w^\prime)$, where $w^\prime: E\cup S \to \rea_{+}$ is the new weight function. Let $\LL(S)$ denote the Laplacian matrix for $G(S)$. Note that the information centrality of a node depends on the graph topology.
If we augment a graph by adding a set of edges $S$,  the information centrality of a node will change. Moreover, adding edges incident to some node $v$ can only increase its information centrality~\cite{DoSn84}. \par

Assume that there is a set of nonexistent edges incident to a particular node $v$, each with a given weight. We denote this candidate edge set as $E_v$.
Consider choosing a subset $S$ of $k$ edges from the candidate set $E_v$ to augment the network so that the information centrality of  node $v$ is maximized. Let $I_v(S)$ denote the information centrality of the node $v$ in augmented network. We define the following set function optimization problem:
\begin{align}\label{pro1}
\underset{S\subset E_v,\, \sizeof{S}=k}{\operatorname{maximize}} \quad I_v(S).
\end{align}
Since the information centrality $I_v$ of a node $v$ is proportional to the reciprocal of $\mathcal{R}_v$, the optimization problem~(\ref{pro1}) is equivalent to  the following  problem:
\begin{align}\label{pro}
	\underset{S\subset E_v,\, \sizeof{S}=k}{\operatorname{minimize}} \quad \mathcal{R}_v(S),
\end{align}
where $\mathcal{R}_v(S)$ is the  resistance distance of $v$  in the augmented network $G(S)$. Without ambiguity, we take $\mathcal{R}(S)$ to replace $\mathcal{R}_v(S)$ for simplicity.\par

\section{Supermodularity of Objective Function}

Let $2^{E_v}$ denote all subsets of $E_v$. Then the resistance distance of  node $v$ in the augmented network can be represented as a set function $\mathcal{R} : 2^{E_v} \to \rea$.
To provide effective algorithms for the above-defined problems, we next prove that the resistance distance of $v$ is a supermodular function. \par

Rayleigh’s  monotonicity law~\cite{DoSn84} shows that the resistance distance between any pair of nodes can only decrease when edges are added. Then, we have the following theorem.
\begin{theorem}\label{thm:MI}
$\mathcal{R}(S)$ is a monotonically decreasing function of the set of edges $S$. That is, for any subsets $S \subset T \subset E_v$,
\begin{align*}
\mathcal{R}(T) < \mathcal{R}(S).
\end{align*}
\end{theorem}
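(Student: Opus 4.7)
The plan is to reduce the claim to Rayleigh's monotonicity law applied termwise to the decomposition $\mathcal{R}(S) = \sum_{u \in V} \mathcal{R}_{uv}(S)$, where $\mathcal{R}_{uv}(S)$ denotes the effective resistance between $u$ and $v$ in the augmented graph $G(S)$. Since Rayleigh's law guarantees that each $\mathcal{R}_{uv}(\cdot)$ is a non-increasing set function of the edge set, summing over $u \in V$ immediately yields the weak monotonicity $\mathcal{R}(T) \leq \mathcal{R}(S)$ for $S \subset T$.

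The only nontrivial point is obtaining the \emph{strict} inequality stated in the theorem. For this I would pick any edge $e = (v,u) \in T \setminus S$, which exists because $S \subsetneq T$. All candidate edges in $E_v$ are incident to the target node $v$, so $e$ creates a direct resistor of finite weight between $v$ and $u$. Using the fact that adding a single edge incident to $v$ strictly decreases $\mathcal{R}_{uv}$ (for instance, via the Sherman--Morrison update of $\LL^\dag$, which gives an explicit negative rank-one correction to $\bb_{u,v}^\top \LL^\dag \bb_{u,v}$, or equivalently via the fact that two nodes with an added finite-weight edge cannot have the same effective resistance as before), one gets $\mathcal{R}_{uv}(T) < \mathcal{R}_{uv}(S)$. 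Combined with the non-strict inequalities for the remaining terms, this yields $\mathcal{R}(T) < \mathcal{R}(S)$.

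The main obstacle I expect is not the monotonicity itself, which is essentially a one-line invocation of Rayleigh's law, but making the strict inequality rigorous without introducing extraneous machinery. The cleanest route is to avoid reproving Rayleigh and instead argue the strictness locally at the single added edge, since any improvement there is inherited by $\mathcal{R}(T)$ because all other terms are already bounded above by their $S$-counterparts. If one prefers to keep the argument entirely at the level of the objective $\mathcal{R}$, one can instead use the formula $\mathcal{R}(S) = \trace{\LL_v(S)^{-1}}$ from equation~(\ref{eqR2}) and note that adding a positive semidefinite rank-one term $w(e)\bb_e\bb_e^\top$ to $\LL_v$ (when $v$ is one endpoint, this contributes a strictly positive diagonal entry) strictly decreases the trace of the inverse by a standard monotonicity property of the matrix inverse on the positive definite cone.
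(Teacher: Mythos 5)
Your proposal is correct and follows essentially the same route as the paper, which proves the theorem simply by invoking Rayleigh's monotonicity law on each pairwise resistance $\mathcal{R}_{uv}$ and summing. You are in fact more careful than the paper: your local Sherman--Morrison argument for the \emph{strict} decrease of $\mathcal{R}_{uv}$ at an added edge $e=(v,u)\in T\setminus S$ supplies the strictness that the paper's one-line citation leaves implicit.
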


We then prove the supermodularity of the objective function $\mathcal{R}(S)$.
\begin{theorem}\label{thm:SM}
$\mathcal{R}(S)$ is supermodular. For any set $S \subset T \subset E_v$ and any edge $e \in E_v \setminus T$,
\begin{align*}
\mathcal{R}(T) - \mathcal{R}(T \cup \{ e \} ) \leq \mathcal{R}(S) - \mathcal{R}(S \cup \{ e \} ).
\end{align*}
\end{theorem}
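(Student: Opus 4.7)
The plan is to reduce supermodularity to the two-element (pairwise) condition and then establish that condition by a mixed second-derivative argument on an interpolation of fractional edge weights.

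First, I will invoke the standard equivalence: by a straightforward induction on $|T \setminus S|$, the claimed supermodularity is equivalent to the pairwise inequality
\begin{align*}
\mathcal{R}(S) + \mathcal{R}(S \cup \{e,f\}) \ge \mathcal{R}(S \cup \{e\}) + \mathcal{R}(S \cup \{f\})
\end{align*}
for all $S \subseteq E_v$ and distinct $e, f \in E_v \setminus S$. Next, writing $e = (v, u_e)$ and $f = (v, u_f)$ with weights $w_e, w_f$, I will set $A := \LL_v(S)$. Because both new edges are incident to $v$, the rows and columns indexed by $v$ that are deleted in forming $\LL_v$ absorb all the off-diagonal contributions, so the update to the submatrix is simply the rank-one PSD update $P_e := w_e \ee_{u_e} \ee_{u_e}^\top$ (and likewise $P_f$). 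I will then define
\begin{align*}
g(t_1, t_2) := \trace{(A + t_1 P_e + t_2 P_f)^{-1}},
\end{align*}
so that, by~(\ref{eqR2}), the four corners $g(0,0), g(1,0), g(0,1), g(1,1)$ are exactly the four resistance-distance values appearing in the pairwise inequality. The fundamental theorem of calculus then reduces the proof to showing $\partial^2 g/\partial t_1\partial t_2 \ge 0$ throughout $[0,1]^2$.

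The computation is short: using $\partial X^{-1} = -X^{-1}(\partial X)X^{-1}$ and cyclicity of the trace, with $X := A + t_1 P_e + t_2 P_f$ one obtains
\begin{align*}
\frac{\partial^2 g}{\partial t_1 \partial t_2} = 2\,\trace{X^{-1} P_e X^{-1} P_f X^{-1}},
\end{align*}
which, given that $P_e$ and $P_f$ are rank one, simplifies to
\begin{align*}
\frac{\partial^2 g}{\partial t_1 \partial t_2} = 2 w_e w_f \bigl(\ee_{u_e}^\top X^{-1} \ee_{u_f}\bigr) \bigl(\ee_{u_e}^\top X^{-2} \ee_{u_f}\bigr).
\end{align*}

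The hard part will be signing this product. For arbitrary PSD matrices the trace $\trace{X^{-1} P_e X^{-1} P_f X^{-1}}$ of a triple product can have either sign, so a generic Loewner-monotonicity argument will not suffice; I have to exploit the Laplacian structure. The crucial observation is that $X$ is a symmetric M-matrix: $\LL_v(S)$ has non-positive off-diagonals and is strictly diagonally dominant in every row corresponding to a neighbor of $v$ (hence invertible), and each $t_i P_\cdot$ only increases diagonal entries. Consequently $X^{-1}$ is entry-wise non-negative. Then $\ee_{u_e}^\top X^{-1} \ee_{u_f}$ is a single entry of $X^{-1}$ and so is $\ge 0$, while $\ee_{u_e}^\top X^{-2} \ee_{u_f} = (X^{-1}\ee_{u_e})^\top (X^{-1}\ee_{u_f})$ is the inner product of two entry-wise non-negative vectors and so is also $\ge 0$. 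Therefore the mixed partial is non-negative throughout $[0,1]^2$, the pairwise inequality holds, and supermodularity of $\mathcal{R}$ follows.
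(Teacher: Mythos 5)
Your proof is correct, and while it lives in the same general framework as the paper's (both reduce the edge additions to diagonal perturbations of $\LL_v$ via~(\ref{eqR2}), interpolate, differentiate a trace of an inverse, and ultimately lean on the entrywise nonnegativity of the inverse of an M-matrix), the actual route is genuinely different. The paper keeps the general pair $S \subset T$, writes $\LL(T)_v = \LL(S)_v + \PP$ for a nonnegative diagonal $\PP$, and shows the one-variable function $f(t) = \trace{(\MM+\PP+t\EE_{uu})^{-1}} - \trace{(\MM+t\EE_{uu})^{-1}}$ is nondecreasing; the crux there is the entrywise domination $\NN^{-1} \geq (\NN+\PP)^{-1}$, obtained by peeling off $\PP$ one diagonal entry at a time with Sherman--Morrison, which then gives $(\NN^{-2})_{uu} \geq ((\NN+\PP)^{-2})_{uu}$. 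You instead invoke the standard reduction to the pairwise condition and compute the mixed second partial of $g(t_1,t_2)$, which factors as $2w_ew_f\,(X^{-1})_{u_eu_f}\,(X^{-2})_{u_eu_f}$; this needs only the nonnegativity of a \emph{single} inverse (no Sherman--Morrison and no comparison between two different inverses), at the price of the extra (routine) induction establishing that the two-element rectangle inequality implies full supermodularity. Your version is arguably cleaner and more self-contained; the paper's version avoids the pairwise reduction and directly exhibits the monotone dependence on the "gap" $\PP$ between $S$ and $T$. Two small points to tighten: justify invertibility of $\LL_v(S)$ by the standard fact that a proper principal submatrix of the Laplacian of a connected graph is a nonsingular M-matrix (strict diagonal dominance in the neighbors' rows alone is not quite a proof), and note that $X = A + t_1P_e + t_2P_f$ remains positive definite on all of $[0,1]^2$ so that $g$ is smooth there.
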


\begin{proof}
Suppose that edge $e$ connects two nodes $u$ and $v$, then $\LL(S\cup\{e\})_v = \LL(S)_v + w(e)\EE_{uu}$, where $\EE_{uu}$ is a square matrix with the $u$th diagonal entry being one, and all other  entries being zeros.
By~(\ref{eqR2}), it suffices to prove that
\small
\begin{align*}
	& \trace{\LL(T)_v^{-1}} - \trace{\kh{\LL(T)_v + w(e) \EE_{uu}}^{-1}} \\
	 \leq & \trace{\LL(S)_v^{-1}} - \trace{\kh{\LL(S)_v + w(e) \EE_{uu}}^{-1}}.
\end{align*}
\normalsize
Since $S$ is a subset of $T$, $\LL(T)_v = \LL(S)_v + \PP$, where $\PP$ is a nonnegative diagonal matrix. For simplicity, in the following proof, we use $\MM$ to denote  matrix $\LL(S)_v$. Then, we only need to prove
\small
\begin{align*}
	&\trace{\kh{\MM+\PP}^{-1}} - \trace{\MM^{-1}}  \\
	\leq &\trace{\kh{\MM+\PP + w(e) \EE_{uu}}^{-1}} - \trace{\kh{\MM + w(e) \EE_{uu}}^{-1}}.
\end{align*}
\normalsize

Define  function $f(t)$, $t \in [0,\infty)$, as
\small
\begin{align*}
f(t) = \trace{\kh{\MM+\PP + t \EE_{uu}}^{-1}} - \trace{\kh{\MM + t \EE_{uu}}^{-1}}.
\end{align*}
\normalsize
Then, the above inequality holds if $f(t)$ takes the minimum value at $t = 0$. We next show that  $f(t)$ is an increasing function by proving $\frac{df(t)}{dt} \geq 0$.
Using the matrix derivative formula  
\small
\begin{align*}
	\frac{d}{dt}\trace{\AA(t)^{-1}} = -\trace{\AA(t)^{-1} \frac{d}{dt}\AA(t) \AA(t)^{-1}},
\end{align*}
\normalsize we can differentiate function $f(t)$ as
\small
\begin{align*}
	\frac{df(t)}{dt}  = & - \trace{\kh{\MM+\PP + t \EE_{uu}}^{-1} \EE_{uu} \kh{\MM+\PP + t \EE_{uu}}^{-1}} \\
 				&+ \trace{\kh{\MM + t \EE_{uu}}^{-1} \EE_{uu} \kh{\MM + t \EE_{uu}}^{-1}}\\
 			      = & - \trace{ \EE_{uu} \kh{\MM+\PP + t \EE_{uu}}^{-2}} \\
 				&+ \trace{ \EE_{uu} \kh{\MM + t \EE_{uu}}^{-2}}\\
 			      =& - \kh{\kh{\MM+\PP + t \EE_{uu}}^{-2}}_{uu} + \kh{\kh{\MM + t \EE_{uu}}^{-2}}_{uu}.
\end{align*}
\normalsize

Let $\NN = \MM + t \EE_{uu}$,  and let $\QQ$ be a nonnegative diagonal matrix with exactly one positive diagonal entry $\QQ_{hh} > 0$ and all other entries being zeros. We now prove that $\NN^{-1}_{ij} \geq \kh{\NN+\QQ}^{-1}_{ij}$ for $1 \leq i,j \leq n-1$.  Using Sherman-Morrison formula~\cite{Me73}, we have
\begin{align*}
\NN^{-1} - \kh{\NN+\QQ}^{-1} = \frac{\QQ_{hh} \NN^{-1} \ee_h\ee_h^\top \NN^{-1}}{1+ \QQ_{hh} \ee_h^\top \NN^{-1} \ee_h}.
\end{align*}
Since $\NN$ is an M-matrix, every entry of $\NN^{-1}$ is positive~\cite{plemmons1977m}, it is the same with  every entry of $\NN^{-1} \ee_h\ee_h^\top \NN^{-1}$.  In addition,   the denominator $1+ \QQ_{hh} \ee_h^\top \NN^{-1} \ee_h$ is also positive, because $\NN$ is positive definite. Therefore, $\NN^{-1} - \kh{\NN+\QQ}^{-1}$ is a positive matrix,  the entries of which are all greater than zero.

By repeatedly applying the above process, we conclude  that $\NN^{-1} \geq \kh{\NN+\PP}^{-1}$  is a positive matrix. Thus,
\begin{align*}
\frac{df(t)}{dt} = - \kh{\kh{\NN+\PP}^{-2}}_{uu} + \kh{\NN^{-2}}_{uu} \geq 0,
\end{align*}
which completes the proof.
\end{proof}

\section{Simple Greedy Algorithm}

Theorems~\ref{thm:MI} and~\ref{thm:SM} indicate that  the objective function~(\ref{pro}) is a monotone and supermodular. Thus, a simple greedy algorithm is sufficient to approximate  problem~(\ref{pro}) with provable optimality bounds. In the greedy algorithm, the augmented edge set $S$ is initially empty. Then $k$ edges are iteratively added to the augmented edge set from the set $E_v$ of candidate edges. At each iteration, an edge $e_i$ in the candidate edge set is selected to maximize $\mathcal{R}(S) - \mathcal{R}(S \cup \{e_i\})$. The algorithm terminates when $|S| = k$. \par
According to~(\ref{eqR1}), the effective resistance $\mathcal{R}_v$ is equal to $n\LL_{vv}^\dag+\mathrm{Tr}(\LL^\dag)$. A naive algorithm requires $O(k |E_v|n^3)$ time complexity, which is prohibitively expense. Below we show that the computation cost can be reduced to $O(n^3)$ by using Sherman-Morrison formula~\cite{Me73}.
\begin{lemma} \label{lem:SMF}
For a connected weighted graph $G=(V,E,w)$ with weighted Laplacian matrix $\LL$, let $e$ be a nonexistent  edge with given weight $w(e)$ connecting  node   $v$. Then,
\small
\begin{align*}
\kh{\LL(\{e\})}^\dag = \kh{\LL + w(e)\bb_e\bb_e^\top}^\dag
= \LL^\dag - \frac{w(e) \LL^\dag \bb_e \bb_e^\top \LL^\dag}{1 + w(e)\bb_e^\top \LL^\dag \bb_e}.
\end{align*}
\normalsize
\end{lemma}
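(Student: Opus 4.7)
The plan is to reduce the pseudoinverse identity to the ordinary Sherman--Morrison formula by shifting the Laplacian with $\frac{1}{n}\JJ$ to make it invertible, and then exploiting the fact that $\bb_e$ is orthogonal to the all-ones vector.

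First I would verify the initial equality. If $e$ connects nodes $u$ and $v'$ with weight $w(e)$, then adding $e$ to $G$ increases the $u$th and $v'$th diagonal entries by $w(e)$ and decreases the $(u,v')$ and $(v',u)$ off-diagonal entries by $w(e)$; this is exactly the rank-one update $w(e)(\ee_u - \ee_{v'})(\ee_u - \ee_{v'})^\top = w(e)\bb_e\bb_e^\top$. Hence $\LL(\{e\}) = \LL + w(e)\bb_e\bb_e^\top$.

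For the main equality, I would use the identity $\LL^\dag = \kh{\LL + \frac{1}{n}\JJ}^{-1} - \frac{1}{n}\JJ$ stated earlier, and apply the same identity to $\LL(\{e\})$ (which is also a connected-graph Laplacian and hence has the same one-dimensional null space spanned by $\mathbf{1}$). This reduces the task to computing $\kh{\LL + \frac{1}{n}\JJ + w(e)\bb_e\bb_e^\top}^{-1}$, where now the base matrix is genuinely invertible, so the classical Sherman--Morrison formula applies directly and gives
\[
\kh{\LL + \tfrac{1}{n}\JJ + w(e)\bb_e\bb_e^\top}^{-1} = \kh{\LL + \tfrac{1}{n}\JJ}^{-1} - \frac{w(e)\,\kh{\LL + \tfrac{1}{n}\JJ}^{-1}\bb_e\bb_e^\top\kh{\LL + \tfrac{1}{n}\JJ}^{-1}}{1 + w(e)\,\bb_e^\top\kh{\LL + \tfrac{1}{n}\JJ}^{-1}\bb_e}.
\]

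The key step, which is where the pseudoinverse formula really differs from the plain Sherman--Morrison, is to notice that $\bb_e = \ee_u - \ee_{v'}$ satisfies $\mathbf{1}^\top \bb_e = 0$, so $\JJ\bb_e = 0$ and $\bb_e$ lies in the range of $\LL$. Consequently $\LL\LL^\dag \bb_e = \bb_e$ and $\JJ\LL^\dag\bb_e = 0$, which together yield $\kh{\LL + \frac{1}{n}\JJ}\LL^\dag\bb_e = \bb_e$, i.e.\ $\kh{\LL + \frac{1}{n}\JJ}^{-1}\bb_e = \LL^\dag\bb_e$; symmetrically, $\bb_e^\top \kh{\LL + \frac{1}{n}\JJ}^{-1} = \bb_e^\top\LL^\dag$. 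Substituting these replacements into the displayed equation and then subtracting $\frac{1}{n}\JJ$ from both sides cancels the shift and produces exactly the claimed formula. The only subtlety (and the piece most likely to be underestimated) is this orthogonality argument, which is precisely what makes the naive pseudoinverse Sherman--Morrison legitimate in this Laplacian setting.
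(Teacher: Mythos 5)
Your proof is correct. The paper itself gives no argument for this lemma --- it simply cites Meyer's result on generalized inversion of modified matrices \cite{Me73} --- so your derivation is a genuinely self-contained alternative: you reduce the pseudoinverse statement to the classical Sherman--Morrison formula by passing to the invertible matrix $\LL + \frac{1}{n}\JJ$, and then use the orthogonality $\mathbf{1}^\top \bb_e = 0$ to show that the shift is invisible in every term involving $\bb_e$, so that it cancels cleanly at the end. You have correctly identified the one place where a naive application of Sherman--Morrison to a pseudoinverse could fail, namely that the update vector must lie in the range of $\LL$; since $\bb_e \perp \mathbf{1}$ this holds here, and your chain $\kh{\LL + \frac{1}{n}\JJ}\LL^\dag\bb_e = \LL\LL^\dag\bb_e + \frac{1}{n}\JJ\LL^\dag\bb_e = \bb_e$ is exactly right (using that $\LL\LL^\dag$ is the orthogonal projector onto $\mathrm{range}(\LL)$ and $\JJ\LL^\dag = 0$). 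Your appeal to the same $\frac{1}{n}\JJ$-shift identity for $\LL(\{e\})$ is also justified, since adding an edge preserves connectivity and hence the null space.

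The only step you leave implicit is that the Sherman--Morrison denominator $1 + w(e)\,\bb_e^\top\kh{\LL+\frac{1}{n}\JJ}^{-1}\bb_e$ is nonzero; this follows immediately because $\LL + \frac{1}{n}\JJ$ is positive definite and $w(e) > 0$, so the denominator is strictly greater than $1$. With that one-line remark added, the argument is complete. Compared with the paper's bare citation, your route buys a transparent, elementary verification at the cost of a few extra lines; it also makes explicit why the formula takes the same algebraic shape as the invertible case, which the citation hides.
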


For a candidate edge not added to  $S$, let  $\mathcal{R}_v^\Delta(e) = \mathcal{R}(S) - \mathcal{R}(S \cup \{e\})$. Lemma~\ref{lem:SMF} and~(\ref{eqR1}) lead to the  following result.
\begin{lemma}\label{lem:SM}
Let $G=(V,E,w)$ be a connected weighted graph with weighted Laplacian matrix $\LL$. Let  $e \not \in E$ be a candidate edge with given weight $w(e)$  incident to node $v$. Then,
\small
\begin{align}\label{eq:SM}
	\mathcal{R}_v^\Delta(e) =
	\frac{w(e) \kh{n\kh{\LL^\dag \bb_e \bb_e^\top \LL^\dag}_{vv}+\trace{\LL^\dag \bb_e \bb_e^\top \LL^\dag}}}{1+ w(e) \bb_e^\top \LL^\dag \bb_e}.
\end{align}
\normalsize
\end{lemma}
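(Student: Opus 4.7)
The plan is to derive the formula for $\mathcal{R}_v^\Delta(e)$ by applying the Sherman-Morrison update from Lemma~\ref{lem:SMF} to the closed-form expression for $\mathcal{R}_v$ given by equation~(\ref{eqR1}). Since the statement of Lemma~\ref{lem:SM} is formulated for a general connected weighted graph $G=(V,E,w)$ with Laplacian $\LL$ (we can absorb the previously added edges of $S$ into the base Laplacian without loss of generality), I would first reduce to the case where $S=\emptyset$, so that $\mathcal{R}_v^\Delta(e) = \mathcal{R}_v(G) - \mathcal{R}_v(G\cup\{e\})$ with the latter graph having Laplacian $\LL + w(e)\bb_e \bb_e^\top$.

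Next, using~(\ref{eqR1}) on both terms, I would write
\begin{align*}
\mathcal{R}_v^\Delta(e) = n\bigl(\LL^\dag_{vv} - (\LL(\{e\}))^\dag_{vv}\bigr) + \trace{\LL^\dag - (\LL(\{e\}))^\dag}.
\end{align*}
By Lemma~\ref{lem:SMF}, the difference of pseudoinverses equals the rank-one matrix
\begin{align*}
\LL^\dag - (\LL(\{e\}))^\dag = \frac{w(e)\, \LL^\dag \bb_e \bb_e^\top \LL^\dag}{1 + w(e)\, \bb_e^\top \LL^\dag \bb_e}.
\end{align*}
Reading off its $(v,v)$ entry and its trace, both of which have the common scalar denominator $1+w(e)\bb_e^\top\LL^\dag \bb_e$, would immediately give
\begin{align*}
\mathcal{R}_v^\Delta(e) = \frac{w(e)\bigl(n(\LL^\dag \bb_e \bb_e^\top \LL^\dag)_{vv} + \trace{\LL^\dag \bb_e \bb_e^\top \LL^\dag}\bigr)}{1 + w(e)\, \bb_e^\top \LL^\dag \bb_e},
\end{align*}
which is exactly~(\ref{eq:SM}).

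I do not expect any serious obstacle: the argument is essentially plugging the Sherman-Morrison identity into~(\ref{eqR1}) and using linearity of the trace and of matrix-entry extraction. The only subtlety worth a brief sentence is the justification for applying Lemma~\ref{lem:SMF} to an arbitrary augmented graph $G(S)$ rather than the original $G$; this is legitimate because $G(S)$ is still a connected weighted graph and its Laplacian $\LL(S)$ is again updated by the rank-one perturbation $w(e)\bb_e\bb_e^\top$ when the new edge $e$ is inserted. Once this is noted, the proof is a direct algebraic simplification with no further technicalities.
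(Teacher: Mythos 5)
Your proposal is correct and follows exactly the route the paper intends: the paper gives no explicit proof but states that Lemma~\ref{lem:SMF} together with~(\ref{eqR1}) yield the result, which is precisely your computation of the rank-one difference of pseudoinverses and the reading off of its $(v,v)$ entry and trace. Your added remark about absorbing the previously added edges of $S$ into the base Laplacian is a reasonable clarification but introduces nothing beyond the paper's implicit argument.
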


Lemma~\ref{lem:SM} yields a  simple greedy algorithm $\ExactSM(G, v, E_v , k)$, as outlined in Algorithm~\ref{alg:MVC1}.  	The first step of this algorithm is to  compute the pseudoinverse of $\LL$, the time complexity of which is $O(n^3)$ time. Then this algorithm works in $k$ rounds, each  involving operations of computations and updates with time complexity $O(n^2)$. Thus, the total running time of Algorithm~\ref{alg:MVC1} is $O(n^3)$.
\begin{algorithm}
	\caption{$\ExactSM(G, v, E_v , k)$}
	\label{alg:MVC1}
	\Input{
		A connected graph $G$; a node $v \in V$; a candidate edge set $E_v$;
		an integer $k \leq |E_v|$
	}
	\Output{
		A subset of $S \subset E_v$ and $|S| = k$
	}
	Initialize solution $S = \emptyset$ \;
	Compute $\LL^\dag$ \;
	\For{$i = 1$ to $k$}{
		Compute $\mathcal{R}_v^\Delta(e)$
 for each $e \in E_v \setminus S$ \;
		Select $e_i$ s.t. $e_i \gets \mathrm{arg\,max}_{e \in E_v \setminus S} \mathcal{R}_v^\Delta(e)$ \;
		Update solution $S \gets S \cup \{ e_i \}$ \;
		Update the graph $G \gets G(V,E \cup \{ e_i \})$ \;
		Update \resizebox{0.6\hsize}{!}{$\LL^\dag \gets \LL^\dag- \frac{w(e_i) \LL^\dag \bb_{e_i} \bb_{e_i}^\top \LL^\dag}{1 + w(e_i)\bb_{e_i}^\top \LL^\dag \bb_{e_i}}$}
	}
    \Return S \;
\end{algorithm}

Moreover, due to the result in~\cite{nemhauser1978analysis}, Algorithm~\ref{alg:MVC1} is able to achieve a $\left(1 - \frac{1}{e}\right)$  approximation factor, as given in the following theorem.
\begin{theorem}
The set $S$ returned by Algorithm~\ref{alg:MVC1} satisfies
\small
\begin{align*}\label{eq:bound1}
	\mathcal{R}(\emptyset) - \mathcal{R}(S) \geq \kh{1 - \frac{1}{e}}(\mathcal{R}(\emptyset) - \mathcal{R}(S^*)),
\end{align*}
\normalsize
where $S^*$ is the optimal solution to~(\ref{pro}), i.e.,
\small
\begin{align*}
	S^* \, \, \defeq \underset{S\subset V,\, \sizeof{S}=k}{\mathrm{arg\,min}} \quad \mathcal{R}(S).
\end{align*}
\normalsize
\end{theorem}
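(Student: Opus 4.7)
The plan is to cast the theorem as an instance of the classical Nemhauser--Wolsey--Fisher guarantee for greedy maximization of a monotone nonnegative submodular set function under a cardinality constraint, which is precisely the result cited as~\cite{nemhauser1978analysis}. To apply that framework, I would introduce the gain function $g : 2^{E_v} \to \rea$ defined by $g(S) \defeq \mathcal{R}(\emptyset) - \mathcal{R}(S)$, and verify the three hypotheses of the Nemhauser--Wolsey--Fisher theorem for $g$.

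First I would check normalization: $g(\emptyset) = 0$ by definition. Next, monotonicity of $g$ follows immediately from Theorem~\ref{thm:MI}: for $S \subset T$, we have $\mathcal{R}(T) < \mathcal{R}(S) \leq \mathcal{R}(\emptyset)$, so $g(T) > g(S) \geq 0$, meaning $g$ is monotone non-decreasing and nonnegative. Submodularity of $g$ is equivalent to supermodularity of $\mathcal{R}$ (the additive constant $\mathcal{R}(\emptyset)$ cancels in the marginal inequality), and this is exactly the content of Theorem~\ref{thm:SM}: for any $S \subset T \subset E_v$ and any $e \in E_v \setminus T$,
\begin{align*}
g(S \cup \{e\}) - g(S) = \mathcal{R}(S) - \mathcal{R}(S \cup \{e\}) \geq \mathcal{R}(T) - \mathcal{R}(T \cup \{e\}) = g(T \cup \{e\}) - g(T).
\end{align*}

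Now observe that Algorithm~\ref{alg:MVC1} at each iteration picks
\begin{align*}
e_i \in \mathrm{arg\,max}_{e \in E_v \setminus S}\, \mathcal{R}_v^\Delta(e) = \mathrm{arg\,max}_{e \in E_v \setminus S}\, \bigl(g(S \cup \{e\}) - g(S)\bigr),
\end{align*}
so it is precisely the standard greedy algorithm for maximizing $g$ subject to the cardinality constraint $|S| \leq k$. Invoking the Nemhauser--Wolsey--Fisher theorem then yields $g(S) \geq (1 - 1/e)\, g(S^\star_g)$, where $S^\star_g$ maximizes $g$ over $k$-subsets of $E_v$. Finally, since $S^\star_g = \mathrm{arg\,max}_{|S|=k} g(S) = \mathrm{arg\,min}_{|S|=k} \mathcal{R}(S) = S^*$, the bound rewrites as the desired inequality $\mathcal{R}(\emptyset) - \mathcal{R}(S) \geq (1 - 1/e)\bigl(\mathcal{R}(\emptyset) - \mathcal{R}(S^*)\bigr)$.

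There is essentially no hard step here once the supermodularity and monotonicity of $\mathcal{R}$ (already established in Theorems~\ref{thm:MI} and~\ref{thm:SM}) are in hand; the only point to be careful about is the sign flip from supermodular minimization to submodular maximization and the subtraction of the baseline $\mathcal{R}(\emptyset)$ so that $g$ is nonnegative with $g(\emptyset)=0$, which are the hypotheses required to quote \cite{nemhauser1978analysis} as a black box.
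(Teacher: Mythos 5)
Your proposal is correct and follows exactly the route the paper intends: the paper gives no explicit proof, simply invoking the Nemhauser--Wolsey--Fisher guarantee after establishing monotonicity (Theorem~\ref{thm:MI}) and supermodularity (Theorem~\ref{thm:SM}), and your argument fills in precisely the standard reduction (defining the normalized, monotone, submodular gain $g(S)=\mathcal{R}(\emptyset)-\mathcal{R}(S)$ and applying the classical bound). Nothing is missing.
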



\section{Fast Greedy Algorithm}

Although Algorithm~\ref{alg:MVC1} is faster than the naive algorithm, it is still  computationally infeasible for large networks,   since it involves the computation of the pseudoinverse for $\LL$. In this section, in order to avoid inverting the matrix $\LL$, we  give an efficient  approximation algorithm, which achieves a  $\left(1 - \frac{1}{e}-\eps\right)$  approximation factor of optimal solution to  problem~(\ref{pro}) in time $\Otil(km\eps^{-2})$.

\subsection{Approximating $\mathcal{R}_v^\Delta(e)$}

In order to solve problem~(\ref{pro}), one need to compute the key quantity  $\mathcal{R}_v^\Delta(e)$ in~\eqref{eq:SM}. Here, we provide an efficient algorithm to approximate  $\mathcal{R}_v^\Delta(e)$ properly. \par

We first consider the denominator in~\eqref{eq:SM}. Assume that the new added edge  $e$ connects  nodes $u$ and $v$.
Note that the term $r_e = \bb_e^\top \LL^\dag \bb_e$  in the denominator is in fact the  resistance distance $\mathcal{R}_{uv}$ between $u$ and $v$ in the network excluding $e$.  It can be computed by the following  approximation
algorithm~\cite{SS11}.

\begin{lemma}\label{lem:ERest}
Let $G = (V,E,w)$ be a weighted connected graph. There is an algorithm $\textsc{ApproxiER}(G, E_v,\eps)$ that
returns an estimate $\hat{r}_e$ of $r_e$ for all $e\in E_v$ in
	$\Otil(m \eps^{-2})$ time.
	With probability at least $1-1/n$, $\hat{r}_e \approx_{\eps} r_e$ holds
	for all $e\in E_v$.
\end{lemma}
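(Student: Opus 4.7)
The plan is to use the classical Spielman--Srivastava approach: represent each effective resistance as a squared Euclidean norm, apply a Johnson--Lindenstrauss projection to reduce dimension, and invoke a nearly-linear-time Laplacian solver so that we never have to form $\LL^\dag$ explicitly.

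First I would rewrite $r_e$ in a form convenient for sketching. Let $\BB\in\R^{m\times n}$ be the signed edge--vertex incidence matrix and $\boldsymbol{W}$ the diagonal matrix of edge weights, so that $\LL = \BB^\top \boldsymbol{W} \BB$. Using $\LL^\dag \LL \LL^\dag = \LL^\dag$, for every $e=(u,v)$ we have
\begin{align*}
r_e = \bb_e^\top \LL^\dag \bb_e = \bb_e^\top \LL^\dag \BB^\top \boldsymbol{W} \BB \LL^\dag \bb_e = \norm{\boldsymbol{W}^{1/2} \BB \LL^\dag \bb_e}_2^2.
\end{align*}
Thus if we set $\boldsymbol{X} = \boldsymbol{W}^{1/2}\BB\LL^\dag \in \R^{m\times n}$, then $r_e = \norm{\boldsymbol{X}\bb_e}_2^2$ for every $e$, and the problem is reduced to approximating the pairwise distances between the columns $\boldsymbol{X}\ee_i$.

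Next I would apply Johnson--Lindenstrauss. Let $\boldsymbol{Q}\in\R^{k\times m}$ be a random $\pm 1/\sqrt{k}$ (or scaled Gaussian) projection with $k = C\log n/\eps^2$ for a suitable constant $C$. A standard JL concentration argument combined with a union bound over the at most $\binom{n}{2}$ pairs of columns $\boldsymbol{X}\ee_i,\boldsymbol{X}\ee_j$ gives that $\norm{\boldsymbol{Q}\boldsymbol{X}\bb_e}_2^2 \approx_\eps \norm{\boldsymbol{X}\bb_e}_2^2 = r_e$ simultaneously for every candidate $e\in E_v$, with probability at least $1-1/n$.

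Then I would compute $\boldsymbol{Y}:=\boldsymbol{Q}\boldsymbol{X} = (\boldsymbol{Q}\boldsymbol{W}^{1/2}\BB)\LL^\dag$ without ever inverting $\LL$. Form $\boldsymbol{Z} := \boldsymbol{Q}\boldsymbol{W}^{1/2}\BB$ in $O(mk)$ time (exploiting that $\BB$ has $2m$ nonzeros), and then for each of the $k$ rows $\zz_i^\top$ of $\boldsymbol{Z}$ invoke a nearly-linear-time Laplacian solver \LaplSolver{} to obtain $\tilde{\yy}_i$ with $\tilde{\yy}_i \approx \LL^\dag \zz_i$ to sufficient accuracy. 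Each such solve takes $\Otil(m)$ time, so the total cost of building an approximation $\tilde{\boldsymbol{Y}}$ to $\boldsymbol{Y}$ is $\Otil(mk)=\Otil(m\eps^{-2})$. The estimate is $\hat{r}_e = \norm{\tilde{\boldsymbol{Y}}\bb_e}_2^2$, computable in $O(k)$ time per edge, and for $|E_v| \leq n$ candidates this adds only $\Otil(n/\eps^2)$ to the running time.

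The main obstacle, and the one that would occupy most of the proof, is carefully combining the two sources of error: the multiplicative $\eps$-distortion from JL and the backward error from the Laplacian solver. One has to set the solver's target accuracy $\delta$ to a sufficiently small polynomial in $1/n$ and $\eps$ (e.g.\ $\delta = \eps/\mathrm{poly}(n)$) and show, using bounds on $\norm{\LL^\dag}$ and on $\norm{\zz_i}$, that the perturbation $\tilde{\boldsymbol{Y}}-\boldsymbol{Y}$ contributes only an $o(\eps)$ multiplicative error to each $\hat{r}_e$; since the solver's cost depends only logarithmically on $1/\delta$, this refinement keeps the total running time at $\Otil(m\eps^{-2})$, yielding the claimed bound.
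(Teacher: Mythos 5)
Your proposal is correct and is precisely the Spielman--Srivastava argument (embedding $r_e = \norm{\boldsymbol{W}^{1/2}\BB\LL^\dag \bb_e}_2^2$, Johnson--Lindenstrauss projection to $O(\log n/\eps^2)$ dimensions, and nearly-linear-time Laplacian solves with a $\mathrm{poly}(1/n,\eps)$ accuracy budget) that the paper invokes for this lemma; the paper itself gives no proof and simply cites~\cite{SS11}, so your reconstruction matches the intended argument.
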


For the numerator of~\eqref{eq:SM}, it includes two terms, $\kh{\LL^\dag \bb_e \bb_e^\top \LL^\dag}_{vv}$ and $\trace{\LL^\dag \bb_e \bb_e^\top \LL^\dag}$. The first term can be calculated by $\kh{\LL^\dag \bb_e \bb_e^\top \LL^\dag}_{vv} = \ee_v^\top \LL^\dag \bb_e \bb_e^\top \LL^\dag \ee_v$. The second term is the trace of an implicit matrix which can be approximated by Hutchinson's Monte-Carlo method~\cite{Hut89}. By generating $M$ independent random $\pm 1$ vectors $\xx_1, \xx_2, \cdots ,\xx_M \in \mathbb{R}^n$ (i.e., independent Bernoulli entries), $\frac{1}{M}\sum\nolimits_{i=1}^M \xx_i^\top \AA \xx_i$ can be used to estimate the trace of matrix $\AA$.
Since $\expec{}{\xx_i^\top \AA \xx_i} = \trace{\AA}$, by the law of large numbers,
$\frac{1}{M}\sum\nolimits_{i=1}^M \xx_i^\top \AA \xx_i$ should be close to $\trace{\AA}$ when $M$ is large.
The following lemma~\cite{AT11}  provides a good estimation of $\trace{\AA}$.
\begin{lemma}
\label{lem:mcL}
	Let $\AA$ be a positive semidefinite matrix with rank $\mathrm{rank}(\AA)$.
	Let $\xx_1,\ldots,\xx_M$ be independent random $\pm 1$ vectors.
	Let $\epsilon, \delta$ be scalars such that $0 < \epsilon \leq 1/2$
	and $0 < \delta < 1$.
	For any $M \geq 24\epsilon^{-2} \ln(2 \mathrm{rank}(\AA) / \delta)$,
	the following statement holds with probability at least $1 - \delta$:
\small
	\begin{align*}
		\frac{1}{M}\sum\limits_{i=1}^M \xx_i^\top \AA \xx_i \approx_\epsilon \trace{\AA}.
	\end{align*}
\normalsize
\end{lemma}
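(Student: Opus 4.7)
The plan is a Chernoff-style concentration argument on the independent random variables $Z_i \defeq \xx_i^\top \AA \xx_i$, preceded by a spectral reduction. Unbiasedness is immediate: since the entries of $\xx_i$ are independent Rademacher, $\expec{}{x_{ik} x_{il}} = \delta_{kl}$, hence $\expec{}{Z_i} = \trace{\AA}$. Because $\AA$ is PSD of rank $r$, I would diagonalize $\AA = \sum_{j=1}^{r} \lambda_j \boldsymbol{u}_j \boldsymbol{u}_j^\top$ with $\lambda_j > 0$, and rewrite $Z_i = \sum_{j=1}^r \lambda_j (\boldsymbol{u}_j^\top \xx_i)^2$. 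Each $\boldsymbol{u}_j^\top \xx_i$ is a Rademacher sum with $\norm{\boldsymbol{u}_j}_2 = 1$, hence $1$-sub-Gaussian by Hoeffding's lemma, and consequently $(\boldsymbol{u}_j^\top \xx_i)^2$ is sub-exponential with mean $1$.

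Next, I would bound the moment generating function of $Z_i - \trace{\AA}$. Using the sub-exponential estimate on each spectral component, one derives a bound of the form $\log \expec{}{\exp(t (Z_i - \trace{\AA}))} \leq \sum_{j=1}^{r} \psi(t\lambda_j)$ for an explicit convex $\psi$, valid for $|t|$ below a threshold of order $1/\max_j \lambda_j$. Tensorizing over the $M$ independent copies, applying Markov's inequality to the upper tail $\Pr\kh{\bar{Z}_M \geq e^{\eps}\trace{\AA}}$ and the symmetric lower tail $\Pr\kh{\bar{Z}_M \leq e^{-\eps}\trace{\AA}}$ with $\bar{Z}_M \defeq \frac{1}{M}\sum_i Z_i$, and optimizing over $t$, one obtains bounds of the form $\exp(-c M \eps^2)$ multiplied by a factor involving $r$ that arises both from the domain restriction on $t$ and from the cumulative contribution of the $r$ non-trivial spectral terms. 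Requiring the union over the two tails to sit below $\delta$ and solving for $M$ yields, after tracking constants, the stated threshold $M \geq 24 \eps^{-2} \ln(2r/\delta)$, and the $\approx_\eps$ conclusion follows.

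The main technical obstacle is the MGF computation in the middle step, because the terms $(\boldsymbol{u}_j^\top \xx_i)^2$ for different $j$ share the randomness in $\xx_i$ and are therefore not independent. A clean way to sidestep this is to invoke the Hanson--Wright inequality, which already provides the required sub-exponential tail bound for a quadratic form in Rademacher variables. Alternatively, following the decoupling argument of Avron and Toledo, one can split $Z_i - \trace{\AA} = 2\sum_{k<l} \AA_{kl}\, x_{ik} x_{il}$ into an off-diagonal Rademacher chaos, decouple by replacing one factor with an independent Rademacher $x'_{il}$, and integrate out $\xx'$ using the inequality $\cosh(s) \leq e^{s^2/2}$; either route reduces the problem to a routine one-dimensional Bernstein bound, at which point the sample-complexity calculation is entirely mechanical.
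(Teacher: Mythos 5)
The paper does not actually prove this lemma: it is quoted directly from Avron and Toledo \cite{AT11}, so there is no in-paper argument to compare you against. Judged on its own, your sketch is the right kind of argument and correctly isolates the one genuine difficulty: the spectral components $(\boldsymbol{u}_j^\top\xx_i)^2$ share the randomness of $\xx_i$, so the MGF of $Z_i-\trace{\AA}$ does not factor over $j$, and the bound $\log\expec{}{\exp(t(Z_i-\trace{\AA}))}\le\sum_{j}\psi(t\lambda_j)$ cannot be obtained by tensorization as written. Your two proposed repairs are both standard and both work, but note that they lead to different theorems, and neither is quite what produces the statement as written. The Avron--Toledo proof (the source of the $\ln(2\,\mathrm{rank}(\AA)/\delta)$ term) instead exploits positivity of the $\lambda_j$: if every one of the $\mathrm{rank}(\AA)$ averages $\frac{1}{M}\sum_i(\boldsymbol{u}_j^\top\xx_i)^2$ lies in $[e^{-\eps},e^{\eps}]$, then so does $\frac{1}{M}\sum_i Z_i/\trace{\AA}$, since it is a convex combination with weights $\lambda_j/\trace{\AA}$; a union bound over the $j$ is exactly where the rank enters the logarithm, and the per-$j$ concentration of the one-dimensional chaos $(\boldsymbol{u}^\top\xx)^2$ is proved by the decoupling-plus-$\cosh(s)\le e^{s^2/2}$ device you mention. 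The Hanson--Wright route, by contrast, controls the whole quadratic form at once using $\norm{\AA}_F^2\le\trace{\AA}\,\norm{\AA}_2\le\trace{\AA}^2$ for PSD $\AA$, and would in fact yield $M=O(\eps^{-2}\ln(1/\delta))$ with \emph{no} rank dependence --- a strictly stronger conclusion, but with a nonexplicit constant. Your final step, ``tracking constants'' to land on exactly $24\eps^{-2}\ln(2\,\mathrm{rank}(\AA)/\delta)$, is asserted rather than carried out; to recover that specific form you need the union-bound-over-eigenvectors structure, not the global chaos bound.
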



Thus, we have reduced  the estimation of the numerator of~\eqref{eq:SM} to the calculation of the quadratic form of $\LL^\dag \bb_e \bb_e^\top \LL^\dag$. If we directly compute the quadratic form, we must first evaluate $\LL^\dag$, the time complexity is high.   To avoid inverting $\LL$, we will utilize the nearly-linear time solver for Laplacian systems from~\cite{kyng2016approximate}, whose
performance can be characterized in the following lemma.

\begin{lemma}
	\label{lem:laplsolver}
	The algorithm $\yy = \LaplSolver(\LL,\zz,\eps)$ takes a Laplacian matrix $\LL$ of a graph $G$ with $n$ nodes and $m$ edges, a vector $\zz \in \rea^n$ and a scalar $\eps > 0$ as input, and returns a vector $\yy \in \rea^n$ such that with probability $1-1/{\rm poly}(n)$ the following statement holds:
\small
	\begin{align*}
		\len{\yy - \LL^\dag \zz}_{\LL} \leq \eps
		\len{\LL^\dag \zz}_{\LL},
	\end{align*}
\normalsize
	where $\len{\xx}_{\LL} = \sqrt{\xx^\top \LL \xx}$.
	The algorithm runs in expected time
	$\Otil(m)$.
\end{lemma}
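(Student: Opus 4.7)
The plan is to reduce the claim to a standard result on approximate sparse Cholesky factorization of Laplacian matrices, following the approach of~\cite{kyng2016approximate}. First, I would construct a sparse lower triangular factor $\bm{U}$ such that $\bm{U}\bm{U}^\top$ is a spectral $\eps$-approximation of $\LL$, i.e. $(1-\eps)\LL \preceq \bm{U}\bm{U}^\top \preceq (1+\eps)\LL$ when restricted to the orthogonal complement of the all-ones vector. Such a factorization can be obtained in $\Otil(m)$ expected time by eliminating vertices in a uniformly random order and replacing each resulting dense Schur complement by a sparse surrogate via random edge sampling; the number of retained edges per elimination step can be kept poly-logarithmic, so the total edge count stays $\Otil(m)$.

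Given $\bm{U}$, one can approximately apply $\LL^\dag$ to an input $\zz$ by forward/backward substitution on $\bm{U}\bm{U}^\top \xx = \zz$, which costs $\Otil(m)$ time. To upgrade this spectral approximation into the $\LL$-norm residual bound stated in the lemma, I would wrap the preconditioned solver inside a preconditioned Richardson or Chebyshev iteration. By the spectral equivalence each iteration contracts the error in the $\LL$-norm by a constant factor, so $O(\log(1/\eps))$ iterations suffice to produce $\yy$ satisfying $\len{\yy - \LL^\dag \zz}_{\LL} \leq \eps \len{\LL^\dag \zz}_{\LL}$. The logarithmic overhead is absorbed into the $\Otil(\cdot)$ notation, and the overall expected running time remains $\Otil(m)$.

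The main obstacle is the concentration analysis of the sparsified elimination: showing that the cumulative deviation introduced by repeated random edge sampling preserves spectral equivalence with probability $1 - 1/{\rm poly}(n)$. The key tool is a matrix martingale Chernoff inequality applied to the sum of rank-one corrections produced at each elimination step; the random vertex ordering is essential for controlling the matrix variance of these corrections and preventing high-degree blowup in intermediate Schur complements. Once this concentration bound is established, both the $\eps$-approximation guarantee and the high-probability success rate follow by tuning the per-step sampling parameters and, if needed, boosting via independent repetitions.
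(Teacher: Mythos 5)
The paper offers no proof of this lemma at all: it is imported as a black-box guarantee from the cited solver of Kyng and Sachdeva~\cite{kyng2016approximate}, so there is no internal argument to compare yours against. Your sketch---randomized sparse Cholesky with sampled Schur complements under a uniformly random elimination order, a matrix-martingale concentration bound to certify spectral equivalence with probability $1-1/{\rm poly}(n)$, and preconditioned Richardson/Chebyshev refinement to convert constant-factor spectral equivalence into the $\eps$-accurate $\LL$-norm bound in $O(\log(1/\eps))$ iterations---is a faithful reconstruction of the argument in that reference, and is correct at the level of detail given.
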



Lemmas~\ref{lem:ERest},~\ref{lem:mcL} and~\ref{lem:laplsolver} result in the  following algorithm $\VReffComp(G, v, E_v, \epsilon)$ for computing  $\mathcal{R}_v^\Delta(e)$ for all $ e \in E_v$, as depicted in Algorithm~\ref{alg:vreffcomp}. The algorithm has a total running time  	$\Otil(m\epsilon^{-2})$, and returns a set of pairs $\{(e, \hat{\mathcal{R}}_v^\Delta(e)) | e \in E_v\}$, satisfying that $\mathcal{R}_v^\Delta(e) \approx_{\epsilon} \hat{\mathcal{R}}_v^\Delta(e)$ for all $ e \in E_v$.

\begingroup
\begin{algorithm}
\small
	\caption{$\VReffComp(G, v, E_v, \epsilon)$}
	\label{alg:vreffcomp}
	\Input{
		A graph $G$; a node $v \in V$; a candidate edge set $E_v$; a real number $0 \leq \epsilon \leq1/2$ \\
	}
	\Output{
		$\{(e,\hat{\mathcal{R}}_v^\Delta(e)) | e \in E_v\}$
	}
	Let $\zz_1,\ldots,\zz_M$ be independent random $\pm 1$ vectors, where $M = \ceil{432\eps^{-2}\ln(2n)}$. \;
	\For{$i = 1$ to $M$}{
		$\yy_i \gets \LaplSolver(\LL, \zz_i, \frac{1}{72}\eps n^{-8} w_{max}^{-4})$ \;
		\For{each $e \in E_v$}{
			Compute $t_i(e) \defeq \yy_i^\top \bb_e \bb_e^\top \yy_i$ \;
		}
	}
	$\xx \gets \LaplSolver(\LL, \ee_v, \frac{1}{72}\eps n^{-9} w_{max}^{-4})$ \;
	\For{each $e \in E_v$}{
	Compute $\alpha(e) \defeq \xx^\top \bb_e \bb_e^\top \xx$ \;
	}
	$\hat{r}_e \gets \textsc{ApproxiER}(G, \eps/3)$ \;
	Compute \resizebox{0.6\hsize}{!}{$\hat{\mathcal{R}}_v^\Delta(e) = w(e) \frac{n\alpha(e) + \frac{1}{M} \sum\limits_{i}^{M} t_i(e)}{1 + w(e)\hat{r}_e}$} for each $e$ \;
    \Return $\{(e,\hat{\mathcal{R}}_v^\Delta(e)) | e \in E_v\}$
\normalsize
\end{algorithm}
\endgroup

\subsection{Fast  Algorithm for Objective Function }

By using Algorithm~\ref{alg:vreffcomp} to approximate $\mathcal{R}_v^\Delta(e)$, we give a fast greedy algorithm $\ApproxiSM(G, v, E_v, k, \epsilon)$ for solving problem~(\ref{pro}), as outlined in Algorithm~\ref{alg:MVC2}.

\begingroup
\begin{algorithm}
\small
	\caption{$\ApproxiSM(G, v, E_v, k, \epsilon)$}
	\label{alg:MVC2}
	\Input{
		A graph $G$; a node $v \in V$; a candidate edge set $E_v$; an integer $k \leq |E_v|$; a real number $0 \leq \epsilon \leq1/2$
	}
	\Output{
		$S$: a subset of $E_v$ and $|S| = k$
	}
	Initialize solution $S = \emptyset$ \;
	\For{$i = 1$ to $k$}{
		$\{e,\hat{\mathcal{R}}_v^\Delta(e) | e \in E_v \setminus S \} \gets \VReffComp(G, v, E_v \setminus S, 3\epsilon)$. \;
		Select $e_i$ s.t. $e_i \gets \mathrm{arg\,max}_{e \in E_v \setminus S} \hat{\mathcal{R}}_v^\Delta(e)$ \;
		Update solution $S \gets S \cup \{ e_i \}$ \;
		Update the graph $G \gets G(V,E \cup \{ e_i \})$
	}
	\Return $S$
\normalsize
\end{algorithm}
\endgroup

Algorithm~\ref{alg:MVC2} works in $k$ rounds (Lines 2-6). In every round, the call of $\VReffComp$ and updates take time $\Otil(m\eps^{-2})$. Then, the total running time of  Algorithm~\ref{alg:MVC2} is $\Otil(km\eps^{-2})$.  The following theorem shows that the output $\hat{S}$ of  Algorithm~\ref{alg:MVC2} gives a $\kh{1 - \frac{1}{e} - \eps}$ approximate solution to problem~(\ref{pro}).
\begin{theorem}
For any  $0<\epsilon \leq1/2$, the set $\hat{S}$ returned by the greedy algorithm above satisfies
\small
\begin{align*}
\mathcal{R}(\emptyset) - \mathcal{R}(\hat{S}) & \geq \kh{1 - \frac{1}{e} - \eps} (\mathcal{R}(\emptyset) - \mathcal{R}(S^*)),
\end{align*}
\normalsize
where $S^*$ is the optimal solution to problem~(\ref{pro}), i.e.,
\small
\begin{align*}
	S^* \, \, \defeq \underset{S\subset V,\, \sizeof{S}=k}{\mathrm{arg\,min}} \quad \mathcal{R}(S).
\end{align*}
\normalsize
\end{theorem}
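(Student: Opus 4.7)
The plan is to recast the problem as maximizing the nonnegative, monotone, submodular set function $g(S) \defeq \mathcal{R}(\emptyset) - \mathcal{R}(S)$---monotonicity of $g$ is Theorem~\ref{thm:MI} and submodularity of $g$ is exactly the supermodularity of $\mathcal{R}$ from Theorem~\ref{thm:SM}---and then to adapt the classical greedy analysis of Nemhauser, Wolsey, and Fisher to the situation where edges are selected with respect to the noisy marginal-gain oracle $\hat{\mathcal{R}}_v^\Delta(\cdot)$ returned by $\VReffComp$, rather than the exact values $\mathcal{R}_v^\Delta(\cdot)$.

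First I would condition on the good event that every call $\VReffComp(G, v, E_v \setminus S_{i-1}, 3\eps)$ made by Algorithm~\ref{alg:MVC2} delivers $\hat{\mathcal{R}}_v^\Delta(e) \approx_{3\eps} \mathcal{R}_v^\Delta(e)$ for all candidates $e$; a union bound over the $k$ iterations, combined with the $1-1/\mathrm{poly}(n)$ success probabilities of Lemmas~\ref{lem:ERest},~\ref{lem:mcL}, and~\ref{lem:laplsolver}, makes this event hold with high probability (shrinking the internal failure parameters by an absolute constant to absorb the $k$ factor if needed). On this event, writing $e_i$ for the greedy pick and $e_i^\star$ for the true exact-oracle maximizer at step $i$, two applications of $\approx_{3\eps}$ give
\[
\mathcal{R}_v^\Delta(e_i) \geq e^{-3\eps}\hat{\mathcal{R}}_v^\Delta(e_i) \geq e^{-3\eps}\hat{\mathcal{R}}_v^\Delta(e_i^\star) \geq e^{-6\eps} \mathcal{R}_v^\Delta(e_i^\star),
\]
so $\ApproxiSM$ behaves as a greedy maximizer on $g$ driven by an $e^{-6\eps}$-approximate exact oracle. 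Plugging this into the standard inductive analysis---using monotonicity and submodularity of $g$ to bound $g(S^\star) - g(S_{i-1}) \leq k \cdot \max_{e} \mathcal{R}_v^\Delta(e) \leq k\, e^{6\eps} \mathcal{R}_v^\Delta(e_i)$, then setting $h_i \defeq g(S^\star) - g(S_i)$ to obtain the one-step contraction $h_i \leq (1 - e^{-6\eps}/k)\, h_{i-1}$---and telescoping over $i = 1, \dots, k$ yields $g(\hat S) \geq (1 - e^{-e^{-6\eps}})\, g(S^\star)$.

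The main obstacle, and the place I expect to spend the most care, is converting the doubly-exponential expression $1 - e^{-e^{-6\eps}}$ into the advertised additive bound $1 - 1/e - \eps$ uniformly over $0 < \eps \leq 1/2$. The Taylor estimates $e^{-6\eps} \geq 1 - 6\eps$ and $e^{6\eps} \leq 1 + C\eps$ on $[0,1/2]$ give
\[
e^{-e^{-6\eps}} \leq e^{-(1-6\eps)} = e^{-1} e^{6\eps} \leq \frac{1}{e} + \frac{C\eps}{e},
\]
and the constant $3$ in the call $\VReffComp(\cdot, 3\eps)$ is tuned precisely so that the resulting slack collapses into the clean $\eps$ advertised in the statement; if this constant turns out to be not tight enough during bookkeeping, it can be enlarged without affecting the asymptotic running time $\Otil(km\eps^{-2})$, which is inherited from $k$ successive calls to $\VReffComp$ of cost $\Otil(m\eps^{-2})$ each. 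A subtle point worth noting during write-up is that only the \emph{marginals} are approximated at each step, not the set values $\mathcal{R}(S)$ themselves, so the recurrence must be driven through the marginal comparison in Step~2 and never through a direct comparison of $g(S_i)$ with $g(\hat S_i^\star)$.
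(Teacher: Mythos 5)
The paper itself omits the proof, deferring to Badanidiyuru--Vondr\'ak; your argument is precisely the standard ``greedy with an approximate marginal-gain oracle'' analysis that this citation intends, built on Theorems~\ref{thm:MI} and~\ref{thm:SM} exactly as the paper sets it up, so in structure your proposal matches the intended proof. The recurrence $h_i \leq (1 - e^{-6\eps}/k)h_{i-1}$, the telescoping to $g(\hat S) \geq (1 - e^{-e^{-6\eps}})\,g(S^*)$, the observation that only the marginals (not the set values) are noisy, and the conditioning on the success of all $k$ oracle calls are all correct.

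The one concrete place where the write-up does not close is the final constant conversion, and your diagnosis of it points the wrong way. With the call $\VReffComp(\cdot,\cdot,\cdot,3\eps)$ as written in Algorithm~\ref{alg:MVC2}, the per-step guarantee is $e^{-6\eps}$, and expanding gives $e^{-e^{-6\eps}} \approx e^{-1}(1+6\eps) = e^{-1} + (6/e)\eps \approx e^{-1} + 2.21\eps$, so the achieved bound is roughly $1 - \tfrac{1}{e} - 2.21\eps$, which is strictly weaker than the advertised $1 - \tfrac{1}{e} - \eps$ (one can check numerically that $1 - e^{-e^{-6\eps}} < 1 - \tfrac{1}{e} - \eps$ already at $\eps = 0.001$). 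The constant $3$ is therefore not ``tuned precisely'' to make the slack collapse --- passing $3\eps$ \emph{loosens} the oracle rather than tightening it --- and the remedy is to \emph{decrease} the accuracy parameter handed to $\VReffComp$ (e.g.\ to $\eps/6$ or smaller), not to enlarge it; your hedge that the constant ``can be enlarged'' would make the gap worse. This is purely a constant-factor bookkeeping repair and leaves the $\Otil(km\eps^{-2})$ running time and the overall structure of your argument intact, but as written the chain of inequalities in your final paragraph does not establish the stated bound.
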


We omit the proof, since it is similar to that in~\cite{BaVo14}.

\section{Experiments}

In this section, we experimentally evaluate the effectiveness and efficiency of our two greedy algorithms on some model and real  networks. All algorithms in our experiments are implemented in Julia. In our algorithms, we use the  $\LaplSolver$~\cite{kyng2016approximate}, the  implementation (in Julia) of  which is available on website\footnote{https://github.com/danspielman/Laplacians.jl}. All experiments were conducted on a machine with 4.2 GHz Intel i7-7700 CPU and 32G RAM.

We execute our experiments on two popular model networks, Barab\'asi-Albert (BA) network and Watts–Strogatz (WS) network, and a large connection of realistic networks from KONECT~\cite{kunegis2013konect} and SNAP\footnote{https://snap.stanford.edu}. Table~\ref{SetNo} provides the information of these networks, where real-world networks are shown in increasing size of the  number of nodes in original networks.

\begin{table}[htbp]
\setlength{\abovecaptionskip}{0.pt}
\caption{Statistics of datasets. For a network with $n$ nodes and $m$ edges, we denote the number of nodes and edges in its largest connected component by $n'$ and $m'$, respectively. }\label{SetNo}
\begin{center}
\normalsize
\resizebox{0.8\columnwidth}{!}{
\begin{tabular}{ccccc}
\Xhline{2.5\arrayrulewidth}
\raisebox{-0.5ex}{Network} & \raisebox{-0.5ex}{$n$} & \raisebox{-0.5ex}{$m$} & \raisebox{-0.5ex}{$n'$} & \raisebox{-0.5ex}{$m'$} \\[0.5ex]
\hline
\raisebox{-0.5ex}{BA network} & \raisebox{-0.5ex}{50} & \raisebox{-0.5ex}{94} & \raisebox{-0.5ex}{50} & \raisebox{-0.5ex}{94} \\[0.5ex]
\raisebox{-0.5ex}{WS network} & \raisebox{-0.5ex}{50} & \raisebox{-0.5ex}{100} & \raisebox{-0.5ex}{50} & \raisebox{-0.5ex}{100} \\[0.5ex]
\raisebox{-0.5ex}{Zachary karate club} & \raisebox{-0.5ex}{34} & \raisebox{-0.5ex}{78} & \raisebox{-0.5ex}{34} & \raisebox{-0.5ex}{78} \\[0.5ex]
\raisebox{-0.5ex}{Windsufers} & \raisebox{-0.5ex}{43} & \raisebox{-0.5ex}{336} & \raisebox{-0.5ex}{43} & \raisebox{-0.5ex}{336} \\[0.5ex]
\raisebox{-0.5ex}{Jazz musicians} & \raisebox{-0.5ex}{198} & \raisebox{-0.5ex}{2742} & \raisebox{-0.5ex}{195} & \raisebox{-0.5ex}{1814} \\[0.5ex]
\raisebox{-0.5ex}{Virgili} & \raisebox{-0.5ex}{1,133} & \raisebox{-0.5ex}{5,451} & \raisebox{-0.5ex}{1,133} & \raisebox{-0.5ex}{5,451} \\[0.5ex]
\raisebox{-0.5ex}{Euroroad} & \raisebox{-0.5ex}{1,174} & \raisebox{-0.5ex}{1,417} & \raisebox{-0.5ex}{1,039} & \raisebox{-0.5ex}{1,305} \\[0.5ex]
\raisebox{-0.5ex}{Hamster full} & \raisebox{-0.5ex}{2,426} & \raisebox{-0.5ex}{16,631} & \raisebox{-0.5ex}{2,000} & \raisebox{-0.5ex}{16,098} \\[0.5ex]
\raisebox{-0.5ex}{Facebook} & \raisebox{-0.5ex}{2,888} & \raisebox{-0.5ex}{2,981} & \raisebox{-0.5ex}{2,888} & \raisebox{-0.5ex}{2,981} \\[0.5ex]
\raisebox{-0.5ex}{Powergrid} & \raisebox{-0.5ex}{4,941} & \raisebox{-0.5ex}{6,594} & \raisebox{-0.5ex}{4,941} & \raisebox{-0.5ex}{6,594} \\[0.5ex]
\raisebox{-0.5ex}{ca-GrQc} & \raisebox{-0.5ex}{5,242} & \raisebox{-0.5ex}{14,496} & \raisebox{-0.5ex}{4,158} & \raisebox{-0.5ex}{13,422} \\[0.5ex]
\raisebox{-0.5ex}{ca-HepPh} & \raisebox{-0.5ex}{12,008} & \raisebox{-0.5ex}{118,521} & \raisebox{-0.5ex}{11,204} & \raisebox{-0.5ex}{117,619} \\[0.5ex]
\raisebox{-0.5ex}{com-DBLP} & \raisebox{-0.5ex}{317,080} & \raisebox{-0.5ex}{1,049,866} & \raisebox{-0.5ex}{317,080} & \raisebox{-0.5ex}{1,049,866} \\[0.5ex]
\raisebox{-0.5ex}{roadNet-TX} & \raisebox{-0.5ex}{1,379,917} & \raisebox{-0.5ex}{1,921,660} & \raisebox{-0.5ex}{1,351,137} & \raisebox{-0.5ex}{1,879,201} \\[0.5ex]
\Xhline{2.5\arrayrulewidth}
\end{tabular}
}
\vspace{-10pt}
\end{center}
\end{table}
\subsection{Effectiveness of Greedy Algorithms}

To show the effectiveness of our algorithms,  we compare the results of our algorithms with the optimum solutions on two small model networks, BA network and   WS network, and two  small  real-world networks, Zachary karate club network and Windsufers contact network. Since these networks are small, we are able to compute the optimal edge set.

For each network, we randomly choose 20 target nodes. For each target node $v$,  the candidate edge set is composed of all nonexistent edges incident to it with unit weight $w = 1$.  And for each  designated  $k = 1,2,\cdots,6$, we add  $k$ edges linked to $v$ and other $k$ non-neighboring nodes of $v$. We then compute the average  information centrality of the 20 target nodes for each $k$. Also, we compute the solutions for the random scheme, by adding $k$ edges from randomly selected $k$  non-neighboring nodes.  The results are reported in Figure~\ref{ComOpt}. We observe that there is little difference between the solutions of our greedy algorithms and the optimal solutions, since their approximation ratio  is always greater than 0.98, which is far better than the theoretical guarantees. Moreover, our greedy schemes outperform the random scheme in these four networks.

%

\begin{figure}
\centering
\includegraphics[width=0.45\textwidth]{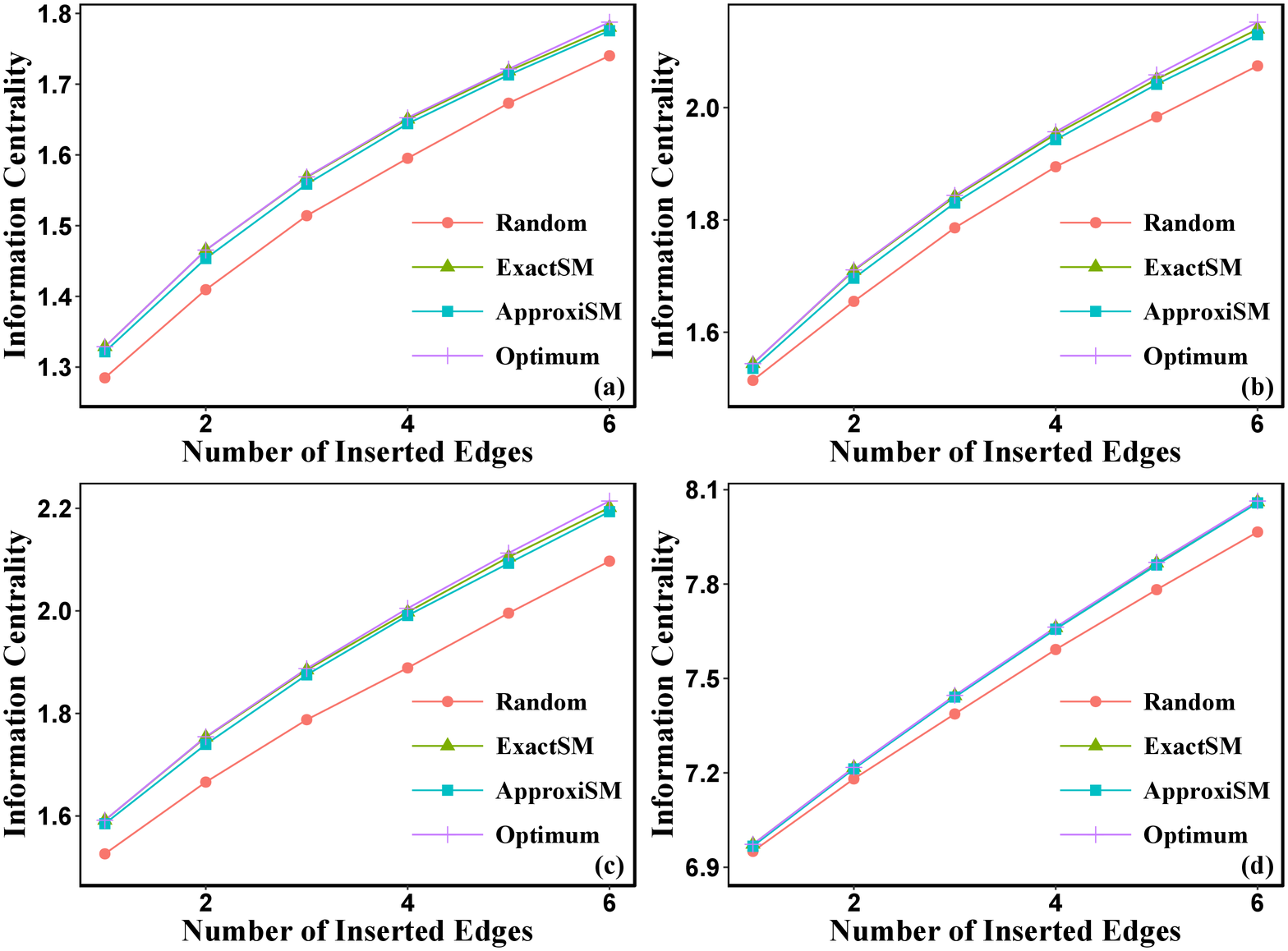}
\caption{Average information centrality of target nodes as a function of the number $k$ of inserted edges for $\ExactSM$, $\ApproxiSM$, random and the optimum solution on four networks: BA  (a), WS  (b), Karate club (c),  and Windsufers (d).\label{ComOpt}}
\end{figure}

To further demonstrate  effectiveness of our algorithms, we  compare the results of our methods with the random scheme and other two baseline schemes, Top-degree and Top-cent, on four other real-world networks. In Top-degree scheme, the added edges are simply the  $k$ edges  connecting target node $v$  and its nonadjacent nodes with the highest degree in the original network; while in Top-cent scheme, the added edges are simply those  $k$ edges  connecting target node $v$ and its nonadjacent  nodes with the largest information centrality  in the original network.

Since the results may vary depending on the initial information centrality of the target node $v$, for each of the four real networks, we select 10 different target nodes at random.  For each target node, we first compute its original information centrality and  increase
it  by adding up to $k = 20$ new edges, using our two greedy algorithms and the three baselines. Then, we compute and record the information centrality of the target node after insertion of every edge. Finally, we compute the average  information centrality of all the 10 target nodes for each   $k =1,2,\ldots, 20$, which is plotted in  Figure~\ref{ComBase}.  We  observe that for all the four real-world networks  our greedy algorithms outperform the three baselines.

\begin{figure}
\centering
\includegraphics[width=.45\textwidth]{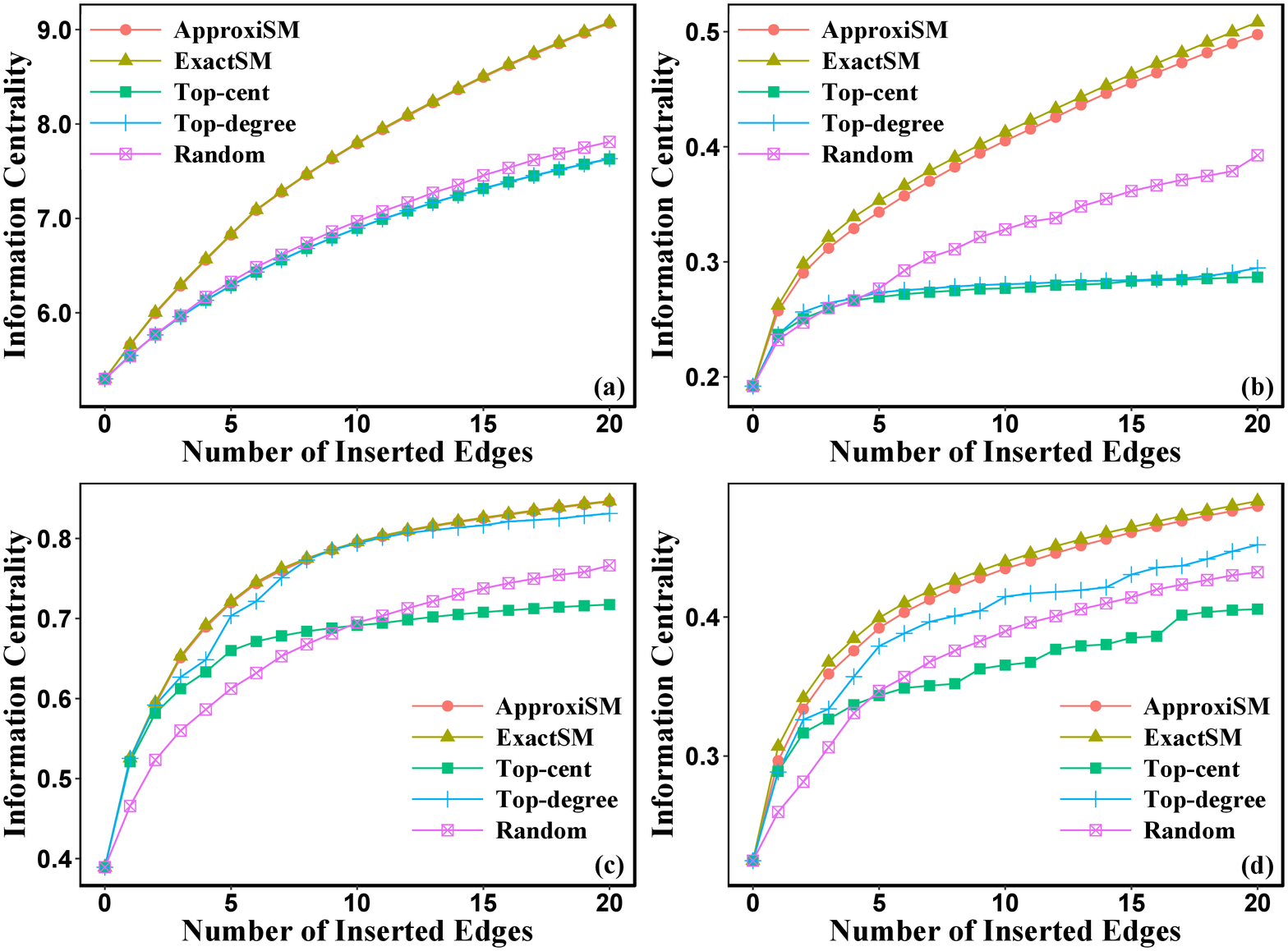}
\caption{Average information centrality of target nodes as a function of the number $k$ of inserted edges for the five heuristics on Jazz musicians (a), Euroroad (b), Facebook (c), Powergrid (d).\label{ComBase} }
\end{figure}

\subsection{Efficiency Comparison of Greedy Algorithms}

Although both of our  greedy algorithms are effective, we will show that their efficiency greatly differs. To this end, we compare the efficiency of the greedy algorithms on several  real-world networks. For each network, we choose stochastically 20 target nodes,  for each of which, we create $k = 10$ new edges incident to it to maximize its information centrality according to Algorithms~\ref{alg:MVC1} and~\ref{alg:MVC2}.  We compute the average  information centrality of  10 target nodes for each network and record the average running times. In Table~\ref{time} we provide the results of average  information centrality and  average running time of our greedy algorithms.  We observe that $\ApproxiSM$ algorithm are faster than $\ExactSM$ algorithm, especially for large  networks,  while their final information centrality score are close. More interestingly, $\ApproxiSM$ applies to massive networks.  For example, for com-DBLP  and roadNet-TX networks, $\ApproxiSM$ computes  their information centrality in half an hour, while $\ApproxiSM$ fails due to its high time complexity.

\begin{table}[tp]
\setlength{\abovecaptionskip}{5.pt}
\setlength{\belowcaptionskip}{-0.cm}
  \centering
  \fontsize{6.5}{8}\selectfont
  \begin{threeparttable}
  \caption{The average running times and results of $\ApproxiSM$ (ASM) and $\ExactSM$ (ESM) algorithms on several real-world networks, as well as  the ratios for times and results of $\ApproxiSM$ to those  of $\ExactSM$.\label{time}}
  \label{tab:performance_comparison}
    \begin{tabular}{ccccccc}
    \toprule
    \multirow{2}{*}{Network}&
    \multicolumn{3}{c}{Time (seconds)}&\multicolumn{3}{c}{Information centrality}\cr
    \cmidrule(lr){2-4} \cmidrule(lr){5-7}
    &ASM&ESM&Ratio&ASM&ESM&Ratio\cr
    \midrule
    Virgili              &1.3996  & 0.9172  & 1.5259 & 2.5005 & 2.5037 & 0.9987\cr
    Euroroad             &0.6563  & 0.7593  & 0.8643 & 0.4003 & 0.4069 & 0.9838\cr
    Hamster  full        &3.0785  & 4.8528  & 0.6344 & 2.9904 & 2.9944 & 0.9987\cr
    Facebook             &1.7151  & 12.9203 & 0.1327 & 0.7937 & 0.7947 & 0.9987\cr
    Powergrid            &5.8727  & 58.3359 & 0.1006 & 0.4327 & 0.4369 & 0.9904\cr
    ca-GrQc              &5.3023  & 34.0228 & 0.1558 & 1.2118 & 1.2136 & 0.9985\cr
    ca-HepPh             &28.7462 & 620.4557& 0.0463 & 2.2569 & 2.2592 & 0.9990\cr
    com-DBLP             &697.1835& -       & -      & 1.1327 & -      & -     \cr
    roadNet-TX           &1569.5059&-       &-       & 0.0556 & -      & -     \cr
    \bottomrule
    \end{tabular}
    \end{threeparttable}

    \vspace{-5pt}
\end{table}

\section{Conclusions}

In this paper, we considered the problem of maximizing the information centrality of a designated node $v$ by adding $k$ new edges incident to it. This problem is equivalent to minimizing the resistance distance $\mathcal{R}_v$ of node $v$. We proposed two approximation algorithms for computing $\mathcal{R}_v$ when $k$ edges are repeatedly inserted in a greedy way. The first one gives a $\left(1-\frac{1}{e}\right)$ approximation of the optimum in time $O(n^3)$.
While the second one returns a $\left(1-\frac{1}{e}-\eps\right)$ approximation in time $\Otil (mk\eps^{-2})$. Since the considered problem has never addressed before, we have no other algorithms to compare with, but compare our algorithms with potential alternative algorithms. Extensive experimental results on model and realistic networks show that our algorithms can often compute an approximate optimal solution. Particularly, our second algorithm can achieve a good approximate solution very quickly, making it applicable to massive networks.


\end{document}